\documentclass[onecolumn,floatfix,aps,nofootinbib]{revtex4}
\pdfoutput=1
\usepackage[dvips]{epsfig}
\usepackage[english]{babel}
\usepackage[utf8]{inputenc}
\usepackage[T1]{fontenc}
\usepackage{hyperref}
\hypersetup{colorlinks=true, linkcolor=blue, citecolor=blue, urlcolor=blue}
\hypersetup{hidelinks,colorlinks=true,breaklinks=true,urlcolor= blue}
\hypersetup{%
    colorlinks = true,
    linkcolor  = blue,
    citecolor = cyan,
  }

\usepackage{lipsum}       
\usepackage{xcolor}       
\usepackage{bbm}
\usepackage{verbatim}
\usepackage{array}
\usepackage{bm} 
\usepackage{amsmath}
\usepackage{yfonts}
\usepackage{amsthm}
\usepackage{amsmath,amscd}
\usepackage{pst-plot}
\usepackage{slashed} 
\usepackage{tikz-cd}
\usepackage{amsfonts}
\usepackage{graphicx}
\usepackage{amssymb}

\usepackage{cases}
\usepackage[autostyle]{csquotes}  

\usepackage{tikz,tcolorbox}

\usepackage{tabularray}
\usepackage{mathtools}

\usepackage{indentfirst} 
\usepackage[titletoc]{appendix}
\usepackage{subeqnarray}
\usepackage{indentfirst} 

\usepackage{xcolor}
\usepackage{calrsfs}

\usepackage{enumitem}

\usepackage{wasysym}
\usepackage[all]{xy}
\usepackage{tikz-cd}

\usepackage{amsmath,amsfonts,amssymb,amsthm,mathrsfs,bbm,braket}

\newtheorem{defi}{Definition}[section]
\newtheorem{prop}{Proposition}[section]
\newtheorem{theo}{Theorem}[section]
\newtheorem{lemma}{Lemma}[section]

\numberwithin{equation}{section}

\usepackage{bbold}

\usepackage{hyperref}
\usepackage{setspace}

\usepackage{ulem}

\begin{document}

\title{Topologically induced deformations of differential forms}

\author{J. M. Hoff da Silva} 
\email{julio.hoff@unesp.br}
\affiliation{São Paulo State University (UNESP), School of Engineering and Sciences, Guaratinguetá - Brazil.}

\author{J. M. B. Matzenbacher} 
\email{j.matzenbacher@unesp.br}
\affiliation{São Paulo State University (UNESP), School of Engineering and Sciences, Guaratinguetá - Brazil.}

\author{R. da Rocha} 
\email{roldao.rocha@ufabc.edu.br}
\affiliation{Center of Mathematics, Federal University of ABC, 09210-580, Santo Andr\'e, Brazil.}

\begin{abstract}
Motivated by deformations of spin structures and their geometric implications, we study a deformed exterior derivative, which is a third-order nilpotent operator. This higher-order differential structure induces an anholonomic rescaling of the local frame and leads to a bifurcation of the associated cohomological structure, with two distinct notions of closedness and exactness. We prove that the Poincaré Lemma fails in both sectors, so that closed deformed forms need not be exact even locally. We then apply the resulting framework to a four-dimensional effective field theory and show that a constant saturation of the deformation parameter can generate a hierarchy between the electroweak and fundamental Planck scales through a purely geometric mechanism, without introducing extra spatial dimensions.
\end{abstract}

\maketitle

\section{Introduction}

Since the seminal work by Cartan \cite{car}, the interplay between spinors and spacetime geometry has been systematically investigated across various contexts \cite{pen}. To a large extent, these works treat spinor components as the building blocks of a pre-geometry, in the sense that spacetime coordinates can be expressed as bilinear products of spinor components. A complementary perspective arises from the existence of inequivalent spin structures on a given spacetime manifold.  Whenever $H^1(M,\mathbb{Z}_2)$ is nontrivial, distinct spin structures can exist over the same base manifold \cite{gp,milnor}. The resulting modification of the Dirac operator thus provides a local differential manifestation of global topological information.  For a review, see Ref. \cite{rev}. In particular, the heat-kernel coefficients and spectral properties of exotic Dirac operators have been shown to encode geometric invariants that distinguish the corresponding topological sectors \cite{DaRocha:2020oju,daRocha:2011yr}.

The physical relevance of this topological structure extends well beyond the construction of the Dirac operator itself. Quantum fields on multiply connected spacetimes have been investigated in a variety of contexts, including quantum field theory and finite-temperature systems \cite{is,Unwin:1979}, gravity and quantum gravity \cite{Hawking,Christensen}, superconductivity \cite{pp}, and many-body systems \cite{Boada}. Inequivalent spin structures have also been related to distinct effects in vacuum polarization and photon propagation \cite{Ford}, while their role has been explored in instanton compactification and confinement \cite{Sasaki,Schechter}. These results collectively demonstrate that the global topology encoded in the choice of spin structure can have concrete consequences for local field dynamics and physical observables \cite{HoffDaSilva:2020uov}.

In Ref. \cite{jhep}, an attempt was made to incorporate non-trivial topological terms into a metric formalism. Despite its interesting results, this approach led to a rather unwieldy framework. In addition, in Ref. \cite{jpa}, a specific deformation of spinor bundles that accounted for non-trivial topological effects was introduced, yielding an additional correction to the differential operator acting on spinor components. In the present work, we reformulate the foundations in Refs.  \cite{jhep,jpa} by incorporating the deformation directly into differential forms, demonstrating that the resulting forms are regular and nondegenerate in regions where the deformation function is nonvanishing and the resulting deformed coframe remains nondegenerate. To this goal, we systematically construct the apparatus of deformed differential forms originating from these spin-structure deformations. A central point of this reformulation is that the deformation is treated as part of the differential structure rather than as an additional term appended to the metric or to a field equation. This distinction allows the algebraic properties of the deformed differential operator to be analyzed independently of any particular physical model. The resulting differential calculus differs qualitatively from the ordinary de Rham calculus. The first nontrivial obstruction to nilpotency occurs at second order, while the third iterate vanishes. This distinction is structural rather than merely notational, since the failure of second-order nilpotency changes both the admissible notions of closedness and exactness and the homological algebra underlying the resulting complex. Our main mathematical result comprises the demonstration that the deformed exterior derivative $\tilde{d}$ satisfies a third-order nilpotency condition ($\tilde{d}^3 = 0$). An exterior derivative exhibiting such third-order nilpotency was previously studied in physics in Ref. \cite{ker} (albeit in a completely different context). In our framework, this higher-order nilpotency induces an anholonomic frame rescaling and a topological bifurcation in de Rham cohomology \cite{kap,vio,mr}. Within this structure, we prove that the Poincaré Lemma fails under both resulting cohomological definitions. Consequently, closed deformed forms are non-exact even locally, and precisely in regions where the underlying deformation is relevant, signaling the emergence of local topological obstructions. In this sense, the obstruction is intrinsic to the deformed differential calculus and should not be identified with the usual obstruction measured by the topology of the underlying manifold. Even on a contractible region, the deformed notion of exactness can therefore differ from its ordinary counterpart.

The geometric character of the deformation also makes it possible to investigate whether the same structure can modify physical scales. In particular, once the deformation is incorporated into the local frame, it induces a corresponding rescaling of the metric and of the kinetic and potential terms of fields defined on the deformed geometry. On the physical front, we apply this formalism to a four-dimensional effective field theory, demonstrating that a constant saturation of the deformation parameter naturally suppresses the electroweak scale relative to the fundamental Planck scale, thereby offering a purely geometric mechanism to mitigate the Higgs hierarchy problem in a manner analogous to that presented in Ref. \cite{rs}, but without resorting to extra spatial dimensions. At this stage, the construction should be understood as an effective mechanism for generating the required scale separation rather than as a complete dynamical solution of the hierarchy problem. In particular, the origin and stabilization of the saturation value of the deformation parameter remain to be established. The resulting picture therefore connects global information in the spin structure with local modifications of the differential calculus and, ultimately, with effective physical scales. The mathematical consistency of this chain, in particular the higher-order nilpotency and the associated failure of the local Poincaré Lemma, is the main focus of the analysis.

This paper is organized as follows. In Section \ref{sec2}, we begin with deformations of spinor bundles and arrive at the corresponding geometric deformations, thereby systematizing and generalizing the formulation presented in Ref. \cite{jpa} into a robust framework. In Section \ref{sec3}, we introduce the de Rham cohomological bifurcation and analyze the validity of the Poincaré Lemma. Section \ref{sec4} is devoted to addressing the Higgs hierarchy problem within this context. Finally, Section \ref{sec5} presents our concluding remarks and outlook.

\section{Deformations in Spinor bundles inducing geometrical deformations}
\label{sec2}
Consider a smooth, orientable, four-dimensional base manifold $M$ with vanishing second Stiefel--Whitney class, $w_2(M)=0$. Furthermore, assume that $M$ possesses a non-trivial fundamental group, $\pi_1(M)\neq 0$, and a non-trivial first \v{C}ech cohomology group with coefficients in $\mathbb{Z}_2$, denoted by $\check{H}^1(M,\mathbb{Z}_2)$. In this setting, multiple inequivalent spin structures for the same base manifold typically exist, and local mappings between sections of these different spin structures can be defined \cite{milnor}. For our argument, let $\rho_\alpha:U_\alpha\rightarrow U(1)$ be a function such that $\rho_\alpha^2=\rho_\beta^2\equiv \rho$ for all $x\in U_\alpha\cap U_\beta$. The function $\rho$ is globally continuous (even if the local functions $\rho_\alpha$ are not) and unimodular. This function induces a local mapping that connects spinors originating from sections of inequivalent bundles.

Let $U$ be an open set in a simple cover of $M$, and let $\varphi:U\rightarrow \mathbb{R}^*$ be a smooth, bounded, real-valued function. We define the map
\begin{eqnarray}\label{0}
i: U(1) &\to& \mathbb{C}^* \nonumber\\
\rho(x) &\mapsto& (\varphi\rho)(x) = \varphi(x)\rho(x),
\end{eqnarray}
with squared modulus $\varphi^2$, as the $\varphi$-deformation of maps between spin structure sections. Let $P_1$ and $P_2$ be two spin structures over $M$ with respective transition functions $\gamma^1_{\alpha\beta}, \gamma^2_{\alpha\beta}: U_\alpha\cap U_\beta\rightarrow \operatorname{Spin}(1,3)$, and define the maps $\xi_{\alpha\beta}\equiv \gamma^1_{\alpha\beta}\gamma^2_{\beta\alpha}: U_\alpha\cap U_\beta\rightarrow \operatorname{Spin}(1,3)$, whose nontriviality reflects the inequivalence of $P_1$ and $P_2$. Before clarifying the role played by the $\varphi$-deformation in this context, we note that $\xi_{\alpha\beta}$ determines an element $[\xi] \in \check{H}^1(M,\mathbb{Z}_2)$, representing the difference class of the spin structures. For instance, a standard result applies: if $P_1\simeq P_2$, then $[\xi]=e$. Indeed, if $P_1\simeq P_2$, there exists a bundle isomorphism $\phi:P_1\rightarrow P_2$. Let $\{\lambda^1_\alpha: U_\alpha\rightarrow P_1\}$ be a system of local sections for $P_1$, and define $\{\lambda^2_\alpha\}$ similarly for $P_2$, so that $\lambda^2_\alpha=\phi\lambda^1_\alpha$. Applying $\phi$ to the local transformation law $\lambda^1_\sigma=\lambda^1_\alpha\gamma^1_{\alpha\sigma}$, directly yields $\lambda^2_\sigma(\lambda^2_\alpha)^{-1}=\gamma^1_{\alpha\sigma}$. Hence, $\gamma^1_{\alpha\sigma}=\gamma^2_{\alpha\sigma}$, which implies $[\xi]=e$. We are now in a position to analyze a non-standard result in the context of $\varphi$-deformations.
\begin{prop}
Let $P_1$ and $P_2$ be two spin structures over the same base manifold $M$. If the difference class $[\xi]$ is trivial, then $P_1\simeq P_2$ if and only if $\varphi(x)$ is a globally constant function\footnote{The equivalence considered in the following proposition is not the ordinary equivalence of spin structures, which is determined solely by the underlying principal bundles and their transition functions. Rather, it concerns the existence of a globally well-defined map compatible with the additional $\varphi$-deformation introduced in Eq. \eqref{0}. The condition on $\varphi$ therefore characterizes compatibility with the deformed construction, rather than ordinary isomorphism of the undeformed spin structures.}.
\end{prop}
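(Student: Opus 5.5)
We read the proposition in the sense fixed by the footnote. The relevant "equivalence" is the existence of a globally defined bundle map $\phi:P_1\to P_2$ that is compatible with the $\varphi$-deformation. Concretely, on each $U_\alpha$ the map is given locally by the deformed factor $\varphi\rho_\alpha$, or by $(\varphi\rho)$ on overlaps, acting on local sections. The plan is to write the compatibility condition on overlaps $U_\alpha\cap U_\beta$ and show that it is consistent if and only if $\varphi$ is constant.

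\textbf{Step 1 (setup).} Since $[\xi]=e$, we may choose local sections so that $\gamma^1_{\alpha\beta}=\gamma^2_{\alpha\beta}$ after refinement. This uses the argument given just before the proposition, run in reverse: a trivial cocycle can be absorbed into a change of local sections. The undeformed bundles are then isomorphic via $\lambda^2_\alpha=\phi\,\lambda^1_\alpha$. The deformed candidate map acts on the $U_\alpha$-section by $\lambda^1_\alpha\mapsto (\varphi\rho_\alpha)\lambda^2_\alpha$, so that $\phi_\alpha=\varphi\rho_\alpha\,\phi$.

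\textbf{Step 2 (overlap condition).} Apply $\phi_\alpha$ to the transformation law $\lambda^1_\sigma=\lambda^1_\alpha\gamma^1_{\alpha\sigma}$, as in the standard computation preceding the proposition. For $\{\phi_\alpha\}$ to glue into a globally well-defined map we need $\phi_\alpha=\phi_\sigma$ on $U_\alpha\cap U_\sigma$. This gives $\varphi(x)\rho_\alpha(x)=\varphi(x)\rho_\sigma(x)$ up to the sign ambiguity $\rho_\alpha=\pm\rho_\sigma$, which is already accounted for by the trivial $\mathbb{Z}_2$ class. The unimodular parts therefore agree, and the remaining obstruction sits entirely in the modulus $\varphi$. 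The map must also be compatible with the right $\operatorname{Spin}(1,3)$ action, i.e.\ it must intertwine the transition functions. Since $\varphi\in\mathbb{R}^*$ multiplies a $\operatorname{Spin}(1,3)$-valued object, we require $\varphi\gamma\in\operatorname{Spin}(1,3)$ and that the deformed transition functions $\varphi(x)\gamma^1_{\alpha\sigma}$ agree with $\gamma^2_{\alpha\sigma}$ uniformly on all overlaps, consistently with the cocycle condition $\gamma_{\alpha\beta}\gamma_{\beta\sigma}=\gamma_{\alpha\sigma}$. Inserting $\varphi$ into the cocycle yields $\varphi(x)^2=\varphi(x)$ pointwise on triple overlaps, up to a global normalization. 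More generally, consistency forces $\varphi(x)=\varphi(y)$ for $x,y$ in overlapping patches. Since $M$ is connected and the cover is simple, a chain of overlapping patches then shows that $\varphi$ is globally constant.

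\textbf{Step 3 (converse).} If $\varphi=c$ is constant, then $\phi_c=c\,\rho\,\phi$ is a global map that commutes with the structure group action up to the constant factor. The constant can be absorbed into the normalization of the sections, so $P_1\simeq P_2$ in the deformed sense.

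The main obstacle is Step 2. The precise notion of "compatible map" is only implicit in Eq.~\eqref{0} and the footnote, so I would first fix it explicitly: $\varphi$ must enter a gluing or cocycle condition, rather than appear as an overall factor that cancels. The key point to exploit is that $\rho$ is globally defined while $\varphi$ is only defined on $U$. Requiring $\varphi\rho$ to patch across the simple cover with the same trivial class forces $d\varphi=0$ on each overlap. Connectedness then upgrades this to global constancy.
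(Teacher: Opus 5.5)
Your Step 2 does not establish the ``only if'' direction. This is a genuine gap, not a matter of presentation.

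With a single function $\varphi$ multiplying both $\rho_\alpha$ and $\rho_\sigma$, the overlap condition $\varphi\rho_\alpha=\varphi\rho_\sigma$ does not involve $\varphi$ at all: it cancels. So no constraint arises, which is the failure you flag at the end.

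The two devices you then use to force a constraint are each inconsistent with your Step 3, where an arbitrary constant $c$ is allowed:
\begin{itemize}
\item Requiring $\varphi\gamma\in\operatorname{Spin}(1,3)$ forces $\varphi=\pm1$ (e.g.\ $\det(\varphi\gamma)=\varphi^2$ in the $SL(2,\mathbb{C})$ picture).
\item Inserting the same $\varphi$ multiplicatively into every transition function turns the cocycle condition into $\varphi^2=\varphi$, i.e.\ $\varphi\equiv1$.
\end{itemize}
The phrase ``up to a global normalization'' does not repair either of these.

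Finally, the claim that consistency forces $\varphi(x)=\varphi(y)$, or $d\varphi=0$, has no source. Gluing and cocycle conditions are pointwise identities on overlaps and never compare values at distinct points. Your chain-of-patches argument therefore has nothing to propagate.

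The missing idea is one you half-state in your last paragraph but do not implement. Since $\varphi$ is only locally defined, each patch carries its own $\varphi_\alpha$, and it is the mismatch between these that must obstruct gluing. The paper makes this work in four steps.
\begin{itemize}
\item It enlarges the structure group to $\operatorname{Spin}(1,3)\times\mathbb{R}^+$, so the scale factor need not live inside $\operatorname{Spin}(1,3)$.
\item It uses the trivial class to write $\gamma^2_{\alpha\beta}=q^{-1}(\alpha)\gamma^1_{\alpha\beta}q(\beta)$ with $q$ a central $\mathbb{Z}_2$-valued cochain. This is equivalent to your Step 1.
\item It defines local maps homogeneous in the local scale, $\phi_\alpha=C\varphi_\alpha^n\lambda^2_\alpha q(\alpha)\gamma_\alpha(v)$.
\item Using $\gamma_\alpha(v)=\gamma^1_{\alpha\beta}\gamma_\beta(v)$, it computes $\phi_\beta=(\varphi_\beta/\varphi_\alpha)^n\phi_\alpha$ on overlaps.
\end{itemize}
The $\phi_\alpha$ then glue exactly when this ratio is $1$, which is the condition the paper identifies with $\varphi$ being globally constant.

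Note that in this construction the scale enters through the coboundary $\varphi_\beta/\varphi_\alpha$, which satisfies the cocycle condition automatically. That is why it avoids the $\varphi^2=\varphi$ problem that a common multiplicative factor on each $\gamma_{\alpha\beta}$ creates.
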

\begin{proof}
    Since $[\xi]=e$, there exists a 0-cochain $q$ (with values in $\mathbb{Z}_2$) such that $\xi_{\alpha\beta}=q(\alpha)q^{-1}(\beta)$ with a trivial coboundary, i.e., $\delta\xi_{\alpha\beta}=e$. Therefore, $\gamma_{\alpha\beta}^1\gamma_{\beta\alpha}^2=q(\alpha)q^{-1}(\beta)$, yielding
\begin{equation}\label{1}
\gamma^2_{\alpha\beta}=q^{-1}(\alpha)\gamma^1_{\alpha\beta}q(\beta), \quad x\in U_\alpha\cap U_\beta.
\end{equation}
Let $\pi:TM\rightarrow M$ be the canonical projection, and define $L_\alpha \coloneqq \pi^{-1}(U_\alpha)$ and $\gamma_\alpha:U_\alpha\rightarrow \operatorname{Spin}(1,3)$. Observe that $\gamma(v)\in \operatorname{Spin}(1,3)$ is the unique element satisfying $v=\lambda_\alpha^1(\pi(v))\gamma_\alpha(v)$. According to Eq. \eqref{0}, a map $\phi$ between the spin structures $P_1$ and $P_2$ defined over open sets covering $M$ must incorporate the $\varphi$-deformation to account for this new degree of freedom. Based on this reasoning, we define $\tilde{L}_\alpha \coloneqq \mathbb{R}^*\times L_\alpha$, which induces an extension of the structure group to $\operatorname{Spin}(1,3)\times \mathbb{R}^+$ given by pairs $(k,a)$ with $k\in \operatorname{Spin}(1,3)$ and $a\in\mathbb{R}^+$. The extension by $\mathbb R^+$ should be understood as an auxiliary enlargement of the structure group that records the local scale degree of freedom introduced by $\varphi$. It does not replace the underlying spin structure. Rather, it provides the bundle-theoretic setting in which the spinorial and scaling transformations can be treated simultaneously.
    
    Let $\phi_\alpha:\tilde{L}_\alpha\rightarrow P_2$ be a map equivariant under $\operatorname{Spin}(1,3)\times \mathbb{R}^+$ transformations. Thus, imposing homogeneous propagation under the rescaling $\varphi\mapsto l\varphi$ implies that $\phi_\alpha$ depends on $\varphi_\alpha$ as $\varphi_\alpha^n$ for a fixed exponent $n$. This observation motivates defining $\phi_\alpha$ as
\begin{equation}\label{2}
\phi_\alpha = C\varphi_\alpha^n(\pi(v))\lambda^2_\alpha(\pi(v))q(\alpha)\gamma_\alpha(v),
\end{equation}
where $C$ is a constant. Since $q\in\mathbb{Z}_2$, the center of $\operatorname{Spin}(1,3)$, Eq. \eqref{1} enables us to write
\begin{equation}\label{3}
\phi_\beta = C\varphi_\beta^n(\pi(v))\lambda_\alpha^2(\pi(v))q(\alpha)\gamma^1_{\alpha\beta}\gamma_\beta(v)
\end{equation}
for $v\in \tilde{L}_\alpha\cap\tilde{L}_\beta$. Recalling that $v=\lambda_\beta^1(\pi(v))\gamma_\beta(v)$, a straightforward calculation yields $[\lambda^1_\alpha(\pi(v))]^{-1}v=\gamma^1_{\alpha\beta}\gamma_\beta(v)$. Hence, using $v=\lambda_\alpha^1(\pi(v))\gamma_\alpha(v)$, we obtain $\gamma_\alpha(v)=\gamma^1_{\alpha\beta}\gamma_\beta(v)$, from which Eq. \eqref{3} gives
\begin{equation}\label{4}
\phi_\beta = \left(\frac{\varphi_\beta}{\varphi_\alpha}(\pi(v))\right)^n \phi_\alpha.
\end{equation}
After \eqref{4}, it is straightforward to see that $\phi$ defines a well-defined global map if and only if $\varphi$ is constant. 
\end{proof} 

The above result asserts that in regions where the $\varphi$-deformation remains relevant, even if simply connected, distinct spin structures can coexist, provided they exist initially. In other words, when $\check{H}^1(M,\mathbb{Z}_2)$ is nontrivial, there exist as many distinct spin structures as there are elements in $\check{H}^1(M,\mathbb{Z}_2)$. The next result is a direct adaptation that settles this point conclusively.
\begin{theo}
Let $P=(M,\pi,\operatorname{Spin}(1,3)\times \mathbb{R}^+)$ be a spin structure with transition functions $\gamma_{\alpha\beta}$, let $g_{\alpha\beta}$ be the transition functions for $B=(M,\pi_B,\operatorname{SO}(1,3)\times\mathbb{R}^+)$, let $\bar{\rho}:\operatorname{Spin}(1,3)\times\mathbb{R}^+\rightarrow \operatorname{SO}(1,3)\times\mathbb{R}^+$ be a homomorphism such that $\bar{\rho}(\gamma_{\alpha\beta})=g_{\alpha\beta}$, and let $[\alpha]\in \check{H}^1(M,\mathbb{Z}_2)$. Then, there exists a spin structure $P'$ such that the difference class between $P$ and $P'$ is $[\alpha]$.
\end{theo}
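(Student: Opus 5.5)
The plan is the standard twisting construction of spin structures, adapted to the enlarged structure group $\operatorname{Spin}(1,3)\times\mathbb{R}^+$. Choose a \v{C}ech $1$-cocycle $\alpha_{\alpha\beta}:U_\alpha\cap U_\beta\to\mathbb{Z}_2$ representing $[\alpha]$. Refine the cover if necessary to a simple (good) cover, so that both the transition functions $\gamma_{\alpha\beta}$ of $P$ and the cocycle $\alpha_{\alpha\beta}$ are defined on the same overlaps. Identify $\mathbb{Z}_2=\{\pm1\}$ with $\ker\bar\rho$, which lies in $\operatorname{Spin}(1,3)\times\{1\}$. Then define
\begin{equation*}
\gamma'_{\alpha\beta}\coloneqq \alpha_{\alpha\beta}\,\gamma_{\alpha\beta}:U_\alpha\cap U_\beta\to \operatorname{Spin}(1,3)\times\mathbb{R}^+ .
\end{equation*}
The $\mathbb{R}^+$ component, which carries the $\varphi$-scaling, is left untouched, so the deformation plays no role in the twist.

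First, I verify that $\gamma'_{\alpha\beta}$ is a cocycle. On triple overlaps, centrality of $\mathbb{Z}_2$ gives
\begin{equation*}
\gamma'_{\alpha\beta}\gamma'_{\beta\sigma}\gamma'_{\sigma\alpha}=\alpha_{\alpha\beta}\alpha_{\beta\sigma}\alpha_{\sigma\alpha}\,\gamma_{\alpha\beta}\gamma_{\beta\sigma}\gamma_{\sigma\alpha}=e .
\end{equation*}
This uses $\delta\alpha=e$ and the cocycle condition for $\gamma$. The standard reconstruction theorem for principal bundles then produces a principal $\operatorname{Spin}(1,3)\times\mathbb{R}^+$-bundle $P'$ with these transition functions. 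Second, I check that $P'$ is a spin structure for the same $B$. Since $\bar\rho$ is a homomorphism that kills $\pm1$, we get $\bar\rho(\gamma'_{\alpha\beta})=\bar\rho(\gamma_{\alpha\beta})=g_{\alpha\beta}$. The local maps $P'|_{U_\alpha}\to B|_{U_\alpha}$ induced by $\bar\rho$ are therefore compatible on overlaps and glue into a global $\bar\rho$-equivariant bundle map $P'\to B$.

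Third, I compute the difference class. Following the definition used before the Proposition, $\xi_{\alpha\beta}=\gamma_{\alpha\beta}\gamma'_{\beta\alpha}$. Substituting and using centrality again gives
\begin{equation*}
\xi_{\alpha\beta}=\alpha_{\beta\alpha}\gamma_{\alpha\beta}\gamma_{\beta\alpha}=\alpha_{\beta\alpha}=\alpha_{\alpha\beta},
\end{equation*}
since $\mathbb{Z}_2$-valued cocycles satisfy $\alpha_{\beta\alpha}=\alpha_{\alpha\beta}^{-1}=\alpha_{\alpha\beta}$. Hence $[\xi]=[\alpha]$.

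I also need to show the construction is independent of the representative. If $\alpha$ is replaced by $\alpha\,\delta q$, the resulting bundle is isomorphic to $P'$ through the gauge change $\lambda_\alpha\mapsto\lambda_\alpha q(\alpha)$. This is exactly the mechanism of Eq.~\eqref{1}, and it shifts $\xi$ by a coboundary.

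The main obstacle is a subtlety in the bookkeeping rather than any hard estimate. The quantity $\xi_{\alpha\beta}=\gamma_{\alpha\beta}\gamma'_{\beta\alpha}$ as written is only guaranteed to be $\mathbb{Z}_2$-valued when both structures are presented with transition functions lifting the same $g_{\alpha\beta}$. I would make this explicit by fixing a common trivialization of $B$. A further point is that the $\mathbb{R}^+$ factor must not interfere, so I would stress that $\ker\bar\rho$ meets $\{e\}\times\mathbb{R}^+$ trivially. The $\varphi$-deformation then cannot alter the $\mathbb{Z}_2$ class, in line with the footnote to the Proposition.
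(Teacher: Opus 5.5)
Your proposal is correct and follows essentially the same route as the paper. Like the paper, you twist the transition functions by the $\mathbb{Z}_2$-cocycle ($\gamma'=\alpha^{-1}\gamma=\alpha\gamma$), use centrality to verify the cocycle condition and reconstruct $P'$, and use $\ker\bar\rho=\{\pm1\}$ to obtain the equivariant map to $B$. The paper writes the $\mathbb{R}^+$ component explicitly as the coboundary $\varphi_\theta/\varphi_\beta$, which you cover by leaving that factor of $\gamma_{\alpha\beta}$ untouched, and your explicit computation $\xi_{\alpha\beta}=\alpha_{\alpha\beta}$ makes precise a step the paper leaves implicit.
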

\begin{proof}
   Let $\varphi_{\beta\theta}:U_\beta\cap U_\theta\to\mathbb{R}^+,$ with 
$\displaystyle\varphi_{\beta\theta}(x)
=
\frac{\varphi_\theta(x)}{\varphi_\beta(x)}.$ 
The corresponding cocycle condition is satisfied, since
\begin{equation}
\varphi_{\beta\theta}
\varphi_{\mu\theta}^{-1}
\varphi_{\mu\beta}
=
\frac{\varphi_\theta}{\varphi_\beta}
\frac{\varphi_\mu}{\varphi_\theta}
\frac{\varphi_\beta}{\varphi_\mu}
=1.
\end{equation} Define $\gamma'_{\beta\theta}=\alpha^{-1}_{\beta\theta}\gamma_{\beta\theta}$, where $\alpha_{\beta\theta}$ is a 1-cocycle taking values in $\mathbb{Z}_2$. Thus,\begin{equation}(\gamma'_{\beta\theta},\varphi_{{\beta\theta}})(\gamma'_{\mu\theta},\varphi_{{\mu\theta}})^{-1}(\gamma'_{\mu\beta},\varphi_{{\mu\beta}}) = (\gamma'_{\beta\theta}\gamma'^{-1}_{\mu\theta}\gamma'_{\mu\beta},\varphi_{{\beta\theta}}\varphi_{{\mu\theta}}^{-1}\varphi_{{\mu\beta}}),\end{equation} which yields\begin{equation}\label{d2}(\gamma'_{\beta\theta},\varphi_{{\beta\theta}})(\gamma'_{\mu\theta},\varphi_{{\mu\theta}})^{-1}(\gamma'_{\mu\beta},\varphi_{{\mu\beta}}) = ((\alpha_{\beta\theta}\alpha^{-1}_{\mu\theta}\alpha_{\mu\beta})^{-1}\gamma_{\beta\theta}\gamma^{-1}_{\mu\theta}\gamma_{\mu\beta},\varphi_{{\beta\theta}}\varphi_{{\mu\theta}}^{-1}\varphi_{{\mu\beta}}).\end{equation}Since $\alpha_{\mu\beta}$ is a cocycle, \eqref{d2} evaluates to $(e,1)$, implying the existence of a principal bundle $P'$ whose transition functions are $\gamma'_{\beta\theta}$. Moreover, $\ker\bar{\rho}=\{1,-1\}\cong\mathbb{Z}_2\subset Z(\operatorname{Spin}(1,3)\times \mathbb{R}^+)=\mathbb{Z}_2\times\mathbb{R}^+$, showing that $\bar{\rho}$ is a central homomorphism. Standard results then apply to ensure the existence of an equivariant map from $P'$ to $B$, confirming that $P'$ is a spin structure.
\end{proof}

The impact of the $\varphi\rho$ mapping is directly reflected in the derivative operator acting on sections of the additional (exotic) spin structures. The contribution arising from the standard map $\rho$ to the derivative operator was derived in several works \cite{is,pp,rev}, while the correction due to the composite map $\varphi\rho$ was established in \cite{jpa}. Here, we frame the analysis within the context established in the proof of the previous proposition: a standard term $\frac{1}{2}\rho^{-1}\partial\rho$ arising from the $\mathfrak{spin}(1,3)$ algebra, alongside an additional term $\varphi^{-1}\partial\varphi$ coming from the scaling algebra. The net result is given by the replacement
\begin{equation}\label{d1}
\partial_i \mapsto \partial_i + \frac{1}{2}\rho^{-1}\partial_i\rho + \varphi^{-1}\partial_i\varphi, \quad i=0, 1,2,3,
\end{equation}
whenever an exotic spinor component is differentiated. Every exotic spinor component is differentiated according to \eqref{d1}. The construction thus separates two logically distinct pieces of information.
The nontrivial topology is encoded globally through the inequivalence of
spin structures and the associated transition data, whereas the function
$\varphi$ provides the local deformation through which this information
enters the differential calculus. The resulting local deformation can
therefore remain nontrivial even in a region whose ordinary topology is
trivial.

The final ingredient required for a nontrivial topology to induce specific geometric deformations is the connection established by Cartan between spinors and spacetime points \cite{car}. As demonstrated in Refs. \cite{car,pen}, any given spacetime point $(x^0,\dots,x^3)$ can be written in terms of a product of spinor components. Generically, one writes $x^i\sim (\zeta^*\chi)^i$, where $\zeta$ and $\chi$ denote spinor components \cite{pen,ox}. In this way, assuming the existence of a local Cartesian coordinate system $\{x^i\}$ on an open subset $U\subset M$ and a smooth function $f:U\rightarrow \mathbb{R}$, the standard differential $df$ expands at first order as 
\begin{equation}
df = \sum_{i=1}^{\dim M}\frac{\partial f}{\partial x^i}dx^i = \sum_{i,j=1}^{\dim M}\frac{\partial f}{\partial x^j}\frac{\partial x^j}{\partial x^i}dx^i \sim \sum_{i,j=1}^{\dim M}\frac{\partial f}{\partial x^j}\frac{\partial ({\zeta^*\chi})^j}{\partial x^i}dx^i.
\end{equation}
In the setup of the replacement \eqref{d1}, it is natural to define the mapping:
\begin{equation}
    df\mapsto \sum_{i,j=1}^{\dim M}\frac{\partial f}{\partial x^j}\Bigg[\frac{\partial}{\partial x^i}+\rho^{-1}\frac{\partial \rho}{\partial x^i}+2\varphi^{-1}\frac{\partial \varphi}{\partial x^i}\Bigg](\zeta^*\chi)^jdx^i, \label{dd2}
\end{equation} where \eqref{d1} is applied to each spinor component $x^i\sim(\zeta^*\chi)^i$, hereafter associated with exotic spinors. Usually, in the absence of group torsion, one can set $\rho=\exp(i\xi(x))$ \cite{pp,rev,jpa,daSilva:2023qnx} for a smooth real-valued function $\xi(x)$. Hence, rewriting the first term on the right-hand side of \eqref{dd2} back as a differential, we arrive at
\begin{equation}
df \mapsto df + \sum_{i=1}^{\dim M}\left[i\,d\xi\,\frac{\partial f}{\partial x^i}x^i + d(\ln \varphi^2)\frac{\partial f}{\partial x^i}x^i\right].
\end{equation}

We adopt a partition of the base manifold following the scheme detailed in \cite{jpa}. Let $\widetilde{M}\subset M$ be partitioned as $\widetilde{M}=\Sigma \cup \widetilde{\mathbb{R}}^n \cup (\widetilde{M}\setminus (\Sigma \cup \widetilde{\mathbb{R}}^n))\cong\mathbb{R}^n$, where $\pi_1(\Sigma)\neq 0$ and $n=\dim M$. Here, $\widetilde{\mathbb{R}}^n$ is simply connected and $\xi(x)$ is constant for all $x\in \widetilde{\mathbb{R}}^n$, though the local imprint of the deformation associated with the
nontrivial spin structure is encoded there through the varying
$\varphi(x)$. While Section \ref{sec4} explores an application of this formalism in $\mathbb{R}^n$, the present analysis focuses on the consequences in $\widetilde{\mathbb{R}}^n$, where
\begin{equation}\label{d3}
df|_{\widetilde{\mathbb{R}}^n}\mapsto df+\sum_{i=1}^n d(\ln\varphi^2)\frac{\partial f}{\partial x^i}x^i.
\end{equation}
To find a basis for the dual space of $\widetilde{\mathbb{R}}^n$, we now specialize $f$ to the coordinate functions $x^k$, exploiting the linear nature of our implementation of Cartan's formalism. Expression \eqref{d3} then reduces to 
\begin{equation}\label{reti1}
dx^k\vert{}_{\widetilde{\mathbb{R}}^n}\mapsto dx^k+x^k d(\ln\varphi^2).
\end{equation}
The deformation obtained so far can be recast in the form presented in Ref. \cite{jhep}. While it yields interesting physical consequences, directly handling the additional term presents technical challenges. For instance, a standard Fourier transform would have an overly restrictive domain of validity and require several extra considerations to avoid infinities. To circumvent this issue, we note the identity $x^k\partial_j(\ln\varphi^2)=\partial_j[x^k \ln\varphi^2]-\delta_j^k \ln\varphi^2$, which allows the transformation (\ref{reti1}) to be rewritten as
\begin{equation}\label{ti}
dx^k|_{\widetilde{\mathbb{R}}^n}\mapsto \tilde{\tilde{d}}x^k \coloneqq (1-\ln\varphi^2)dx^k + d(x^k\ln\varphi^2).
\end{equation} Therefore, by subtracting the last term of $\tilde{\tilde{d}}x^k$ (in a procedure analogous to subtracting counterterms to absorb infinities in quantum field theory), that is, considering $\tilde{\tilde{d}}x^k - d(x^k\ln\varphi^2) \coloneqq \tilde{d}x^k = (1-\ln\varphi^2)dx^k$, we are left with a regular and manageable deformed basis for $(\widetilde{\mathbb{R}}^n)^*$. It is indeed straightforward to verify that the set $\{\tilde{d}x^i = (1-\theta(x))dx^i\}$, where $\theta(x)\equiv\ln\varphi^2$, is linearly independent and spans the dual space of $\widetilde{\mathbb{R}}^n$. A few remarks are in order at this point. First, to avoid ill-defined behavior, suitable boundary conditions must be imposed on $\theta(x)$ (or, equivalently, on $\varphi(x)$) within the region $\widetilde{\mathbb{R}}^n$ and across the transitions to $\Sigma$ and $\mathbb{R}^n$. In particular, in Section \ref{sec4}, we explore a saturation value for $\varphi$ that allows us to address the hierarchy problem. Furthermore, although the formalism developed here might seem to require a global Cartesian coordinate system for generalization to the entire manifold (especially in applying Cartan's formalism), standard constructions such as Riemannian normal coordinates suffice to resolve this issue.  

\section{de Rham cohomological bifurcation}
\label{sec3}

    We now explore the main properties of differential forms on $\widetilde{\mathbb{R}}^n$, where the differential operator is modified by the nontrivial topology, thus establishing a suitable framework for investigating Poincaré Lemma. Let $\Lambda_1(\widetilde{\mathbb{R}}^n)$ and $\Lambda^1(\widetilde{\mathbb{R}}^n)$ denote $\widetilde{\mathbb{R}}^n$ (as a vector space) and its dual $(\widetilde{\mathbb{R}}^n)^*$, respectively. Accordingly, $\Lambda^k(\widetilde{\mathbb{R}}^n)$ denotes the space of differential $k$-forms, while $\Lambda_k(\widetilde{\mathbb{R}}^n)$ denotes the space of $k$-vectors. As previously discussed, the relation between the deformed exterior algebra basis and the standard basis is $\tilde{d}x^i = (1-\theta(x))\,dx^i$. Therefore, an arbitrary $1$-form $\omega_{(1)}\in\Lambda^1(\widetilde{\mathbb{R}}^n)$, given by $\omega_{(1)} = \omega_i(x)\,\tilde{d}x^i$, can be written in the standard basis as\begin{equation}\omega_{(1)} = (1-\theta(x))\omega_i(x)\,dx^i.\end{equation}By the same reasoning, a generic $2$-form $\omega_{(2)}\in \Lambda^2(\widetilde{\mathbb{R}}^n)$ acquires a factor of $(1-\theta(x))^2$, taking the general form $\omega_{(2)} = (1-\theta(x))^2 \omega_{ij}(x)\,dx^i\wedge dx^j$. In both expressions, $\omega_i(x)$ and $\omega_{ij}(x)$ denote the corresponding coefficient functions. Proceeding inductively, this pattern holds for all higher-degree forms; thus, a general $k$-form $\omega \in \Lambda^k(\widetilde{\mathbb{R}}^n)$ is represented as\begin{equation}\omega = (1-\theta(x))^k \omega_{i_1 i_2\dots i_k}(x) \,dx^{i_1}\wedge dx^{i_2}\wedge \dots \wedge dx^{i_k}.\end{equation}
 \begin{defi}
    The deformed exterior derivative of a $k$-form is the map $\tilde{d}: \Lambda^k(\widetilde{\mathbb{R}}^n) \to \Lambda^{k+1}(\widetilde{\mathbb{R}}^n)$ defined by
\begin{equation}\label{cadeira}
\tilde{d}\omega = \partial_n\left[ (1-\theta(x))^k \omega_{i_1 i_2\dots i_k}(x)\right]\,\tilde{d}x^n\wedge dx^{i_1}\wedge dx^{i_2}\wedge \dots \wedge dx^{i_k}.
\end{equation}
\end{defi} We are now in a position to state the result upon which the cohomological bifurcation rests.
    \begin{prop}
        \label{propnilpotencia}
        On $\widetilde{\mathbb{R}}^n$, a deformed differential $k$-form is nilpotent of order three with respect to the deformed exterior derivative.
    \end{prop}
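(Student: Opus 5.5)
The plan is to rewrite the definition \eqref{cadeira} entirely in terms of the ordinary de Rham operator $d$ and then use only $d^2=0$, the Leibniz rule, and $\beta\wedge\beta=0$ for $1$-forms $\beta$.

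\textbf{Step 1: rewrite $\tilde d$ through $d$.} Write the $k$-form as $\omega=h_{I}\,dx^{I}$ with $h_I:=(1-\theta)^k\omega_{i_1\dots i_k}$, the coefficient in the standard basis. Since $\tilde d x^n=(1-\theta)\,dx^n$, Eq. \eqref{cadeira} reads
\begin{equation*}
\tilde d\omega=(1-\theta)\,\partial_n h_I\,dx^n\wedge dx^I=(1-\theta)\,d\omega .
\end{equation*}
So on any form expressed in the standard basis, $\tilde d$ acts as $d$ followed by multiplication by the function $u:=1-\theta$. The one point to fix here is that iterating $\tilde d$ means re-reading the output $(k+1)$-form in the $\tilde d x$-basis, i.e. absorbing the prefactor into the coefficient $h$ at each step. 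With the identity above, that re-reading is automatic: the operator is $\omega\mapsto u\,d\omega$ for every degree $k$.

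\textbf{Step 2: compute $\tilde d^2$.} Apply $\tilde d$ twice and use the Leibniz rule with $d^2=0$:
\begin{equation*}
\tilde d^2\omega=u\,d(u\,d\omega)=u\,du\wedge d\omega=-(1-\theta)\,d\theta\wedge d\omega .
\end{equation*}
This is generically nonzero wherever $d\theta\neq0$ and $d\omega\not\parallel d\theta$. A simple example is a $0$-form $\omega=x^2$ with $\theta$ depending on $x^1$. This example shows that the order is exactly three and not two, which is what the bifurcation in the remainder of Section \ref{sec3} requires.

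\textbf{Step 3: show $\tilde d^3=0$.} Apply $\tilde d$ once more:
\begin{equation*}
\tilde d^3\omega=u\,d\big(u\,du\wedge d\omega\big)=u\big(du\wedge du\wedge d\omega+u\,d^2u\wedge d\omega-u\,du\wedge d^2\omega\big)=0 .
\end{equation*}
The first term vanishes because $du\wedge du=0$ for the $1$-form $du$, and the other two vanish by $d^2=0$.

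In index form this is the observation that each surviving term contains a contraction of an antisymmetric $dx^p\wedge dx^m$ with a symmetric object. The three possibilities are $\partial_p\theta\,\partial_m\theta$, $\partial_p\partial_m\theta$, and $\partial_p\partial_n h_I$. I would include this index computation as a cross-check against the literal form of \eqref{cadeira}.

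\textbf{Main obstacle.} The delicate step is Step 1, namely the consistency of the mixed basis in \eqref{cadeira}. The output carries one $\tilde d x^n$ but ordinary $dx^{i}$'s, so the resulting $(k+1)$-form is not literally of the form $(1-\theta)^{k+1}\omega'\,dx\wedge\cdots$ with a chosen $\omega'$.

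I would handle this by defining the coefficient $h$ of the output simply as whatever multiplies $dx^{n}\wedge dx^{I}$, here $(1-\theta)\partial_n h_I$. I would then check that feeding this $h$ back into \eqref{cadeira} for degree $k+1$ reproduces $u\,d$. If the reading with $(1-\theta)^{k+1}$ divided out were intended instead, the operator would change to $u\,d$ conjugated by powers of $u$. I would then redo Step 3 with $\tilde d=u^{k+2}\,d\,u^{-(k+1)}$ on degree $k$, and check the cancellation again. I expect it to fail in that reading, which confirms that the $u\,d$ reading is the one the definition intends.
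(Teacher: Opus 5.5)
Your proof is correct and is essentially the paper's argument written without indices: with $h_I=(1-\theta)^k\omega_{i_1\dots i_k}$, your $\tilde d^{2}\omega=-(1-\theta)\,d\theta\wedge d\omega$ expands to exactly the paper's component formula $-(1-\theta)^{k+1}\partial_m\theta\,\partial_n\omega_{i_1\dots i_k}\,dx^m\wedge dx^n\wedge dx^{i_1}\wedge\dots\wedge dx^{i_k}$ (the $k\,\omega_{i_1\dots i_k}\partial_n\theta$ piece drops by symmetry); the paper itself later uses $\tilde d=s\,d$ in its explicit example, and your $du\wedge du=0$, $d^2=0$ step just spells out its terse claim that a third application gives zero. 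One correction to your closing remark: \eqref{cadeira} depends on $\omega$ only through its standard-basis coefficient $h_I$, so the reading $u\,d$ is forced, and a conjugated operator such as $u^{k+2}\,d\,u^{-(k+1)}$ would not make Step 3 fail but would give $\tilde d^{2}=0$ identically, i.e.\ it would lose the exact order three rather than the cubic identity.
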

        \begin{proof}
            Note that expression \eqref{cadeira} can be rewritten as
\begin{equation}
\tilde{d}\omega = (1-\theta(x))^k\left[(1-\theta(x))\partial_n\omega_{i_1 i_2\dots i_k}(x) - k\omega_{i_1 i_2\dots i_k}(x)\partial_n\theta(x) \right]dx^n\wedge dx^{i_1}\wedge dx^{i_2}\wedge \dots \wedge dx^{i_k}.
\end{equation}
A direct calculation shows that the second exterior derivative of $\omega$ reads
\begin{equation}
\tilde{d}^2\omega = -(1-\theta(x))^{k+1}\partial_m\theta(x)\partial_n\omega_{i_1 i_2\dots i_k}(x)\,dx^m\wedge dx^n\wedge dx^{i_1}\wedge dx^{i_2}\wedge \dots \wedge dx^{i_k},
\end{equation}
from which applying a third derivative yields $\tilde{d}^3\omega = 0$ identically.
        \end{proof}

It is important to stress that the relation $\tilde d^{\,3}=0$ is understood as an identity on the deformed differential algebra defined above, rather than as a property of the ordinary exterior derivative acting on the undeformed basis. In particular, the non-vanishing of $\tilde d^{\,2}$ is an intrinsic feature of the deformed calculus and is not attributable to a failure of the ordinary exterior derivative itself. The distinction between the ordinary and deformed calculi will be maintained throughout the subsequent discussion.

Let $\mathring\Omega(M)$ denote the graded space of differential forms on $M$, and suppose that
\begin{equation}
\tilde{d}:\Omega^k(M)\rightarrow\Omega^{k+1}(M),
\qquad
\tilde{d}^{\,3}=0.
\end{equation}
Thus, $\tilde{d}$ does not define, in general, an ordinary
differential complex, since $\tilde{d}^{\,2}\neq0$. Rather, the
relation $\tilde{d}^{\,3}=0$ places the construction in the
framework of a $3$-complex (or $N$-complex with $N=3$), provided the
graded structure and the degree-one action are understood with respect
to the deformed differential algebra defined above. First, define
\begin{equation}
H_{(1)}^k(\mathring\Omega(M),\tilde{d})
:=
\frac{
\ker\!\left(
\tilde{d}:\Omega^k(M)\rightarrow\Omega^{k+1}(M)
\right)
}{
\operatorname{im}\!\left(
\tilde{d}^{\,2}:\Omega^{k-2}(M)\rightarrow\Omega^k(M)
\right)
}.\label{quo}
\end{equation}
The quotient (\ref{quo}) is well defined, since $\tilde{d}^{\,3}=0$ implies
$\operatorname{im}\tilde{d}^{\,2}
\subseteq
\ker\tilde{d},$ as 
$\tilde{d}(\tilde{d}^{\,2}\alpha)
=
\tilde{d}^{\,3}\alpha
=
0,$ where $\alpha$ is an arbitrary $k$-form.  
Second, define
\begin{equation}
H_{(2)}^k(\mathring\Omega(M),\tilde{d})
:=
\frac{
\ker\!\left(
\tilde{d}^{\,2}:\Omega^k(M)\rightarrow\Omega^{k+2}(M)
\right)
}{
\operatorname{im}\!\left(
\tilde{d}:\Omega^{k-1}(M)\rightarrow\Omega^k(M)
\right)
}.
\end{equation}
This quotient is likewise well defined, since
$
\operatorname{im}\tilde{d}
\subseteq
\ker\tilde{d}^{\,2},
$ as a consequence of
the relation $\tilde{d}^{\,2}(\tilde{d}\alpha)
=
\tilde{d}^{\,3}\alpha
=
0.$ 
Thus, the two generalized cohomology sectors are respectively represented by 
\begin{equation}
{
H_{(1)}^k(\tilde{d})
=
\frac{
\ker\tilde{d}|_{\Omega^k}
}{
\operatorname{im}\tilde{d}^{\,2}|_{\Omega^{k-2}}
}
},\qquad\quad{
H_{(2)}^k(\tilde{d})
=
\frac{
\ker\tilde{d}^{\,2}|_{\Omega^k}
}{
\operatorname{im}\tilde{d}|_{\Omega^{k-1}}
}
}.
\end{equation}
In the ordinary de Rham  case, where $\tilde{d}^{\,2}=0$, the first sector reduces to the usual de Rham (dR) cohomology,
\begin{equation}
H_{(1)}^k(\tilde{d})
=
\frac{\ker\tilde{d}}
{\operatorname{im}\tilde{d}}
\equiv
H_{\mathrm{dR}}^k(M).
\end{equation}
Therefore, when $\tilde{d}^{\,2}\neq0$ but $\tilde{d}^{\,3}=0$, the appropriate mathematical structure is not an ordinary de Rham complex, but rather a $3$-complex with the two generalized cohomology sectors $H_{(1)}^k(\tilde{d})$ and $H_{(2)}^k(\tilde{d})$. In this sense, the transition from the ordinary de Rham regime $\tilde{d}^{\,2}=0$ to the higher-order regime $\tilde{d}^{\,2}\neq0$ with $\tilde{d}^{\,3}=0$ can be characterized as a {cohomological bifurcation}, provided that this terminology is explicitly defined in terms of the corresponding change in the cohomological structure.

The construction is closely related in spirit to the generalized cohomology associated with $N$-complexes, although the present realization is geometrically motivated by a deformation of spin structures and of the associated differential basis. We therefore do not assume that the general algebraic results for abstract $N$-complexes automatically apply to the present differential calculus. Establishing the precise correspondence between the two frameworks is an interesting question in its own right. It is worth emphasizing that we use the term cohomological bifurcation here in a descriptive sense. The ordinary single de Rham cohomology is replaced, once second-order nilpotency is lost, by two naturally associated generalized cohomological sectors. This terminology does not imply that these generalized groups are themselves de Rham cohomology groups in the standard sense.
       
    As a result, two distinct ways of defining closedness and exactness for deformed differential forms emerge. This underlies the bifurcated structure of the topology (see \cite{kap,vio} for generalizations). Given a deformed differential $k$-form $\omega \in \Lambda^k(\widetilde{\mathbb{R}}^n)$, there are two natural approaches to the fundamental concepts of closedness and exactness:
\begin{enumerate}
\item $\omega$ is said to be closed if $\tilde{d}\omega = 0$, and exact if there exists a deformed differential $(k-2)$-form $\eta \in \Lambda^{k-2}(\widetilde{\mathbb{R}}^n)$ such that $\omega = \tilde{d}^2\eta$;
\item alternatively, $\omega$ is said to be closed if $\tilde{d}^2\omega = 0$, and exact if there exists a deformed differential $(k-1)$-form $\eta \in \Lambda^{k-1}(\widetilde{\mathbb{R}}^n)$ such that $\omega = \tilde{d}\eta$.
\end{enumerate}
    It can be readily verified that, in both cases, every exact form is closed, just as in the standard formalism. Nevertheless, the converse requires a more detailed analysis. Determining the conditions under which a closed form is exact is precisely what motivates the study of the Poincaré Lemma in the present context. We now examine how the standard construction applies to the deformed basis configuration. Let the natural inclusion $i_t$ be defined as
        \begin{eqnarray}
            i_t: \widetilde{A}\subset\widetilde{\mathbb{R}}^n &\hookrightarrow& \mathbb{R}\times \widetilde{\mathbb
        {R}}^n\nonumber\\
        (x^{i_1},x^{i_2},\cdots,x^{i_n}) &\mapsto& i_t(x^{i_1},x^{i_2},\cdots,x^{i_n}) = (t,x^{i_1},x^{i_2},\cdots,x^{i_n}),
        \end{eqnarray}
        where $\widetilde{A}$ is an open subset of $\widetilde{\mathbb{R}}^n$. The map $i_t$ induces a pullback $i^*_t$ given by
    \begin{eqnarray}
        i^*_t : \Lambda^k(\mathbb{R}\times \widetilde{\mathbb{R}}^n) &\hookrightarrow& \Lambda^k(\widetilde{\mathbb{R}}^n)\nonumber\\
        \omega &\mapsto& \omega_t \coloneq i^*_t\omega = \omega\circ i_t,
    \end{eqnarray}
    where $\omega \in \Lambda^k(\mathbb{R}\times \widetilde{\mathbb{R}}^n)$. Here $i_t^*$ denotes the ordinary pullback of the underlying differential forms. It should not be assumed a priori that this pullback intertwines $\tilde d$ with the corresponding deformed differential on $\widetilde{\mathbb R}^n$. Whenever such an identity is required below, it is established explicitly from the definitions. Whenever $\tilde{d}$ acts on a
pulled-back form, its action is computed directly from the definition
\eqref{cadeira}. The following definition and lemma mirror the standard construction with slight modifications, but we include them to keep the presentation self-contained. A crucial difference from the standard proof of the Poincaré Lemma is that the deformed differential is not, in general, natural with respect to arbitrary pullbacks. Consequently, the usual homotopy argument cannot simply be imported from the ordinary de Rham complex. The purpose of the construction below is therefore not to assume the standard homotopy identity, but to determine explicitly which additional terms arise from the deformation and whether they can vanish under the present definition of $\tilde d$.
    
\begin{defi}
Let $\widetilde{A}$ be a subset of $\widetilde{\mathbb{R}}^n$. The homotopy operator $H$ is the linear map
\begin{equation}
H : \Lambda^{k+1}(\mathbb{R}\times \widetilde{A}) \to \Lambda^{k}(\widetilde{A})
\end{equation}
such that, for any $\omega \in \Lambda^{k+1}(\mathbb{R}\times \widetilde{A})$, one has:
\begin{enumerate}[label=\roman*)]
\item $H\omega = 0$, if $\omega = (1-\theta(t,x))^{k+1}\omega_{i_1 i_2\dots i_{k+1}}(t,x)\,\, dx^{i_1}\wedge dx^{i_2}\wedge \dots \wedge dx^{i_{k+1}}$;
\item $H\omega = \left(\int_0^1 (1-\theta(t,x))^{k+1}\omega_{t i_1\dots i_k}(t,x) \, dt\right)\,\, dx^{i_1}\wedge dx^{i_2}\wedge \dots \wedge dx^{i_k}$, if 
\begin{equation*}
\omega = (1-\theta(t,x))^{k+1}\omega_{t i_1\dots i_k}(t,x)\,\, dt\wedge dx^{i_1}\wedge dx^{i_2}\wedge \dots \wedge dx^{i_k}.
\end{equation*}
\end{enumerate}
\end{defi}

    \begin{lemma}\label{leminha}
The following properties hold for the homotopy operator $H$:
\begin{enumerate}[label=\roman*)]
\item It commutes with $C^\infty$ functions that do not depend on $t$;
\item It satisfies $H(\tilde{d}\omega) + \tilde{d}(H\omega) = \omega_1 - \omega_0$ for all $\omega \in \Lambda^{k+1}(\mathbb{R}\times \widetilde{A})$, where $\omega_1 = i^*_1\omega$ and $\omega_0 = i^*_0\omega$.
\end{enumerate}
\end{lemma}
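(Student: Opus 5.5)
By linearity of $H$, $\tilde d$, $i_0^*$ and $i_1^*$, I will verify both items separately on the two types of forms in the definition of $H$. Type (i) forms contain no $dt$ factor; type (ii) forms contain exactly one. Throughout I will use the expanded form of \eqref{cadeira} from the proof of Proposition \ref{propnilpotencia}, now on $\mathbb{R}\times\widetilde A$ with coordinates $(t,x^1,\dots,x^n)$:
\begin{equation*}
\tilde d\big[(1-\theta)^{k}\omega_I\,dx^I\big]=\partial_t\big[(1-\theta)^{k}\omega_I\big]\,\tilde dt\wedge dx^I+\partial_m\big[(1-\theta)^{k}\omega_I\big]\,\tilde dx^m\wedge dx^I .
\end{equation*}

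\textbf{Item i).} This is immediate. If $g=g(x)$ is smooth and independent of $t$, then in case (ii) $g$ factors out of the $t$-integral, and in case (i) both $H(g\omega)$ and $gH\omega$ vanish. I will state this first, because it is what allows the coefficient functions of $dx^I$ to be moved freely through $H$ in item ii).

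\textbf{Item ii), type (i) forms.} Take $\omega=(1-\theta)^{k+1}f\,dx^{I}$ with $|I|=k+1$. Then $H\omega=0$, so $\tilde d H\omega=0$. In $\tilde d\omega$, only the $\tilde dt\wedge dx^I$ term contributes to $H$, since the spatial terms are again of type (i). Applying the definition of $H$, the $t$-integral of the $t$-derivative should produce, by the fundamental theorem of calculus, the boundary values at $t=1$ and $t=0$. These are precisely $i_1^*\omega$ and $i_0^*\omega$, because $i_t^*$ is the ordinary pullback and leaves the $dx^I$ part unchanged.

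\textbf{Item ii), type (ii) forms.} Take $\omega=(1-\theta)^{k+1}g\,dt\wedge dx^{J}$. Here $i_0^*\omega=i_1^*\omega=0$, since $i_t^*dt=0$, so the task is to show $H\tilde d\omega=-\tilde dH\omega$. Under $\tilde d$, the $\partial_t$ term dies because $dt\wedge dt=0$. The remaining terms are $\partial_m[\cdots]\,\tilde dx^m\wedge dt\wedge dx^J=-\partial_m[\cdots]\,dt\wedge\tilde dx^m\wedge dx^J$. Applying $H$ to these, and using item i) to pass the $t$-independent frame factors through, I will then differentiate under the integral sign in $x^m$. This should reproduce $-\tilde d$ applied to $\int_0^1(1-\theta)^{k+1}g\,dt\,dx^J$, computed directly from \eqref{cadeira} and not via any naturality of $\tilde d$.

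\textbf{Main obstacle.} The hard part is bookkeeping the powers of $(1-\theta)$. The frame elements $\tilde dt=(1-\theta)dt$ and $\tilde dx^m=(1-\theta)dx^m$ carry an extra factor $(1-\theta(t,x))$, which depends on $t$. This factor does not commute with $\int_0^1dt$, nor does it match the normalization $(1-\theta)^{k+1}$ in the definition of $H$ once the degree has gone up. My first attempt will be to always rewrite each $\tilde d\omega$ in the canonical form $(1-\theta)^{k+2}\,\omega'_{tI}\,dt\wedge dx^I$ before applying $H$, so that $H$ integrates exactly the full standard-basis coefficient of $dt\wedge dx^I$. Under this reading the type (i) case gives the integrand $(1-\theta)\,\partial_t[(1-\theta)^{k+1}f]$.

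I would then check whether this convention makes both cases close up exactly. In the type (ii) case it means matching $\int_0^1(1-\theta)\,\partial_m[\cdots]\,dt$ against $(1-\theta)\,\partial_m\int_0^1[\cdots]\,dt$. The required identity needs the factor $(1-\theta)$ to behave as a $t$-independent weight inside both integrals. Isolating that requirement precisely, and confirming it holds with the normalization chosen in the definition of $H$, is the step I would scrutinize most carefully.
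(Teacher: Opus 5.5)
Your item i) and your split into forms with and without $dt$ match the paper. However, the convention you choose for the $t$-direction makes the type (i) case fail. You expand $\tilde d\omega$ using $\tilde dt=(1-\theta)dt$. As you compute yourself, $H$ then integrates $(1-\theta)\,\partial_t\alpha$, where $\alpha=(1-\theta)^{k+1}f$. Since $\tilde d(H\omega)=0$ in this case, you would need $\int_0^1(1-\theta)\,\partial_t\alpha\,dt=\alpha(1,x)-\alpha(0,x)$, and this is false:
\begin{itemize}
\item if $\theta$ depends on $t$, the integral is not a boundary term (integrating by parts leaves $\int_0^1\alpha\,\partial_t\theta\,dt$);
\item even for $\theta=\theta(x)$, you get $(1-\theta(x))(\omega_1-\omega_0)$ rather than $\omega_1-\omega_0$.
\end{itemize}
So the fundamental-theorem-of-calculus step you announce contradicts the integrand you derive. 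The paper avoids this by letting $\tilde d$ act on $\mathbb{R}\times\widetilde A$ with an undeformed time direction, $\tilde d\omega=(\partial_t\alpha\,dt+\partial_j\alpha\,\tilde dx^j)\wedge dx^{I}$. Then $H$ picks up exactly $\partial_t\alpha$ and returns $\alpha(1,x)-\alpha(0,x)$, which is $i_1^*\omega-i_0^*\omega$.

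For forms containing $dt$, you correctly isolate what is needed: the factor $1-\theta$ in $\tilde dx^m$ must pass through the $t$-integral. But you leave this as an open check, so the case is not proved. The paper closes it as follows, with $\gamma=(1-\theta)^{k+1}\omega_{tJ}$:
\begin{itemize}
\item it treats $\tilde dx^j$ as a $t$-independent factor that passes through $H$ (your item i)), writing $H(\tilde d\omega)=-\bigl(\int_0^1\partial_j\gamma\,dt\bigr)\tilde dx^j\wedge dx^{J}$;
\item it computes $\tilde d(H\omega)=\partial_j\bigl(\int_0^1\gamma\,dt\bigr)\tilde dx^j\wedge dx^{J}$ directly from \eqref{cadeira}, differentiating under the integral sign.
\end{itemize}
This step is valid only if the frame factor does not depend on $t$, i.e.\ $\theta$ on $\mathbb{R}\times\widetilde A$ is pulled back from $\widetilde A$. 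The paper uses this implicitly, and your proof must state it as an explicit assumption. Otherwise $\int_0^1(1-\theta)\,\partial_j\gamma\,dt\neq(1-\theta)\,\partial_j\int_0^1\gamma\,dt$, and the identity $H\tilde d\omega=-\tilde dH\omega$ genuinely fails.
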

    \begin{proof}
        Consider a deformed differential $(k+1)$-form $\omega$ and a $t$-independent function $f(x)$. Item (i) in Lemma \ref{leminha} follows trivially from the fact that $f(x)$ can be pulled outside the integral, yielding $[H,f(x)]\omega = 0$. Item (ii) in Lemma \ref{leminha} is proven by splitting the argument into two cases. In the first case, assume that $\omega$ does not involve $dt$. Then, by definition, $\tilde{d}(H\omega) = 0$. Defining $\alpha(t,x) = (1-\theta(t,x))^{k+1}\omega_{i_1 i_2\dots i_{k+1}}(t,x)$, the computation of $H(\tilde{d}\omega)$ yields\begin{equation}H(\tilde{d}\omega) = H\left[\left(\frac{\partial \alpha(t,x)}{\partial t }dt + \frac{\partial \alpha(t,x)}{\partial x^j }\tilde{d}x^j\right)\wedge dx^{i_1}\wedge \dots \wedge dx^{i_{k+1}} \right] = \left(\int_0^1\frac{\partial \alpha(t,x)}{\partial t }dt\right)\,dx^{i_1}\wedge \dots \wedge dx^{i_{k+1}}.\end{equation}Therefore, $H(\tilde{d}\omega) = (\alpha(1,x)-\alpha(0,x))\,dx^{i_1}\wedge \dots \wedge dx^{i_{k+1}} = \omega_1 - \omega_0$, from which $H(\tilde{d}\omega) + \tilde{d}(H\omega) = \omega_1 - \omega_0$. In the second case, assume that $\omega$ involves $dt$. Setting $\gamma(t,x) = (1-\theta(t,x))^{k+1}\omega_{t i_1\dots i_k}(t,x)$, one finds\begin{equation}H(\tilde{d}\omega) = -\left(\int_0^1 \frac{\partial \gamma (t,x)}{\partial x^j}dt\right)\,\tilde{d}x^j\wedge dx^{i_1}\wedge \dots \wedge dx^{i_k}.\end{equation}On the other hand, a similar computation shows that $\tilde{d}(H\omega) = -H(\tilde{d}\omega)$, yielding $H(\tilde{d}\omega) + \tilde{d}(H\omega) = 0$. It is worth noting that for forms containing $dt$, the pullback $i^*_t\omega$ vanishes, which implies $\omega_1 - \omega_0 = 0$.
    \end{proof}

Furthermore, under Scenario 1 above (in which a form $\omega \in \Lambda^{k}(\widetilde{\mathbb{R}}^n)$ is defined to be closed when $\tilde{d}\omega = 0$), let $\widetilde{A}\subset \widetilde{\mathbb{R}}^n$ be a star-shaped domain. Define the dilation map $\mathcal{H}$ as
\begin{eqnarray}
    \mathcal{H}:\mathbb{R}\times \widetilde{A} &\to& \widetilde{A},\nonumber \\
(t,x^1,x^2,\dots, x^n) &\mapsto& \mathcal{H}(t,x^1,x^2,\dots, x^n) = (tx^1, tx^2, \dots, tx^n).
\end{eqnarray}
Note that $\mathcal{H}\circ i_0 : \widetilde{A}\to \{0\}$ and $\mathcal{H}\circ i_1 = \text{id}_{\widetilde{A}}$. From Lemma \ref{leminha}, one obtains
\begin{equation}
\tilde{d}(H(\mathcal{H}^*\omega)) + H(\tilde{d}(\mathcal{H}^*\omega)) = i_1^*(\mathcal{H}^*\omega) - i_0^*(\mathcal{H}^*\omega).
\end{equation}
Since $i_1^*(\mathcal{H}^*\omega) = (\mathcal{H}\circ i_1)^*\omega = \omega$ and $i_0^*(\mathcal{H}^*\omega) = (\mathcal{H}\circ i_0)^*\omega = 0$ (for $k \ge 1$), it follows that
\begin{equation}
\tilde{d}(H(\mathcal{H}^*\omega)) + H(\tilde{d}(\mathcal{H}^*\omega)) = \omega.
\end{equation}
Assuming $\omega$ to be a closed deformed form ($\tilde{d}\omega = 0$), we have $\tilde{d}(\mathcal{H}^*\omega) = 0$, which yields $\tilde{d}(H(\mathcal{H}^*\omega)) = \omega$. Let us examine this last relation in detail. Suppose, for instance, that $\omega$ is exact. Then, there exists a differential $(k-2)$-form $\eta$ such that $\omega = \tilde{d}^2\eta$, and thus $\tilde{d}(H(\mathcal{H}^*\omega)) = \tilde{d}^2\eta$. Owing to the third-order nilpotency property ($\tilde{d}^3 = 0$), applying the exterior derivative to this last expression leads to $\tilde{d}^2(H(\mathcal{H}^*\omega)) = 0$.
        
        The analysis of $\tilde{d}^2(H(\mathcal{H}^*\omega))$ can be split into two distinct cases. If $\mathcal{H}^*\omega$ does not involve $dt$, then $H(\mathcal{H}^*\omega)=0$ and $\tilde{d}^2(H(\mathcal{H}^*\omega)) = 0$ holds trivially without additional constraints. In the second case, where $\mathcal{H}^*\omega$ involves $dt$, the condition $\tilde{d}^2(H(\mathcal{H}^*\omega)) = 0$ implies
        \begin{eqnarray}\int_0^1 (1-\theta(tx))^k\left[\frac{k}{(1-\theta(tx))}\partial_{i_2}\theta(tx)\omega_{t i_3\dots i_{k+1}}(tx) - \partial_{i_2}\omega_{t i_3\dots i_{k+1}}(tx) \right]dt\, \times \nonumber\\ \times\,  \partial_{i_1}\theta(x)(1-\theta(x))\, dx^{i_1}\wedge dx^{i_2}\wedge \dots \wedge dx^{i_{k+1}} = 0.
        \end{eqnarray}From this result, a deformed closed $k$-form is exact if and only if the following constraint holds:\begin{equation}\label{eqdiff}\int_0^1 (1-\theta(tx))^k\left[\frac{k}{(1-\theta(tx))}\partial_{i_2}\theta(tx)\omega_{t i_3\dots i_{k+1}}(tx) - \partial_{i_2}\omega_{t i_3\dots i_{k+1}}(tx) \right]dt = 0.\end{equation}Clearly, the expression above admits $\omega = \tilde{d}^2\eta$ as one of its possible solutions, which in terms of components reads\begin{equation}\omega_{i_1 i_2\dots i_k}(x) = -\frac{1}{(1-\theta(x))}\partial_{i_1}\theta(x)\partial_{i_2}\eta_{i_3 i_4\dots i_k}(x).\end{equation}Since condition \eqref{eqdiff} is not satisfied for arbitrary coefficients $\omega_{i_1\dots i_{k}}(x)$, but instead restricts $\omega$ to a (possibly infinite) family of solutions, we conclude that the standard construction of the Poincaré Lemma does not hold for a general deformed $k$-form $\omega$. This suffices to establish the following theorem:\begin{theo}\label{gr}Poincaré's Lemma does not hold for deformed forms on a star-shaped domain in $\widetilde{\mathbb{R}}^n$. That is to say, closed forms on $\widetilde{\mathbb{R}}^n$ are not, in general, exact even locally.\end{theo}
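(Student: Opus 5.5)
To establish Theorem \ref{gr} I will not rely only on the homotopy construction, since that argument shows merely that one particular primitive fails. Instead I will exhibit explicit counterexamples in each of the two sectors: a form $\omega$ on a star-shaped (even convex) region $\widetilde{A}\subset\widetilde{\mathbb{R}}^n$ that is closed but admits no primitive of the required type. Throughout I work in the standard basis. There, by the rewriting used in the proof of Proposition \ref{propnilpotencia}, $\tilde d$ acts on a coefficient function $g$ of a $k$-form, written $\omega=g\,dx^{I}$ with $g=(1-\theta)^k\omega_I$, as $\tilde d\omega=(1-\theta)\,dg\wedge dx^I$. In other words, $\tilde d\omega=(1-\theta)\,d\omega$ in ordinary language. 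Iterating gives $\tilde d^2\omega=(1-\theta)\,d\big((1-\theta)\,d\omega\big)=-(1-\theta)\,d\theta\wedge d\omega$, which matches the formula in the proof of Proposition \ref{propnilpotencia}. I assume $\theta$ is nonconstant on $\widetilde{A}$ with $1-\theta\neq0$ there, as in the region where the deformation is relevant.

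\emph{Scenario 1.} A $\tilde d$-closed form is exactly a $d$-closed form, since $1-\theta$ is invertible. Every $\tilde d^2$-exact form has the shape $-(1-\theta)\,d\theta\wedge d\eta$, so it lies in the ideal generated by $d\theta$. In particular it satisfies $d\theta\wedge\omega=0$. So I will take $n\ge2$ and $k=2$, and choose a closed 2-form with $d\theta\wedge\omega\neq0$ somewhere. For example, if $\partial_1\theta\neq0$ at a point, take $\omega=dx^2\wedge dx^3$ when $n\ge3$. Then $d\theta\wedge\omega$ has the nonzero component $\partial_1\theta\,dx^1\wedge dx^2\wedge dx^3$, so $\omega$ is not $\tilde d^2$-exact. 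For $n=2$ a different choice is needed. Here $\omega=dx^1\wedge dx^2$ is closed. It is $\tilde d^2$-exact iff $dx^1\wedge dx^2=-(1-\theta)\,d\theta\wedge d\eta$ for some function $\eta$. This fails at any critical point of $\theta$, and generically $\theta$ has one in the region. A cleaner route is to pick $\omega=h\,dx^1\wedge dx^2$ with $h$ nonvanishing where $d\theta=0$. If $d\theta$ never vanishes, I will instead use the degree-$k=1$ or $k=0$ analogue: a $\tilde d$-closed $1$-form cannot be $\tilde d^2$-exact because $\Lambda^{-1}=0$. So any nonzero constant 1-form $dx^1$ is already a counterexample. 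This last observation makes Scenario 1 almost immediate in low degree, and I will present it first.

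\emph{Scenario 2.} Closedness here means $d\theta\wedge d\omega=0$, and exactness means $\omega=(1-\theta)\,d\eta$. The latter forces $d\big(\omega/(1-\theta)\big)=0$. So I need a form with $d\theta\wedge d\omega=0$ but $d\big(\omega/(1-\theta)\big)\neq0$. Take $k=1$ and $\omega=\theta\,d\theta$, or more simply $\omega=f(\theta)\,d\theta$. Then $d\omega=0$, so $\omega$ is closed in Scenario 2. However, $d\big(f(\theta)\,d\theta/(1-\theta)\big)=0$ as well, so this choice fails. I will therefore use $\omega=x^1\,d\theta$. Then $d\omega=dx^1\wedge d\theta$, so $d\theta\wedge d\omega=0$ and $\omega$ is closed. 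On the other hand, $d\big(\omega/(1-\theta)\big)=dx^1\wedge d\theta/(1-\theta)$, which is nonzero wherever $d\theta$ is not parallel to $dx^1$. Choosing coordinates so that $\partial_2\theta\neq0$ somewhere (possible for nonconstant $\theta$ and $n\ge2$) completes this case.

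The main obstacle is justifying that these counterexamples are legitimate within the paper's conventions. Specifically, I must check that my identification of $\tilde d$ with $(1-\theta)d$ on standard-basis coefficients agrees with definition \eqref{cadeira} for all degrees, including the $k$-dependent factors, and that exactness is tested against arbitrary deformed forms $\eta$ rather than merely the homotopy candidate $H(\mathcal{H}^*\omega)$. I will settle the first point by re-deriving the rewritten formula in the proof of Proposition \ref{propnilpotencia}. Finally, I will tie the result back to condition \eqref{eqdiff}, noting that these $\omega$ violate it.
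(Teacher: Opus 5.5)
Your proof is correct for $n\ge3$, which includes the paper's setting $n=\dim M=4$. In Scenario~1 it takes a genuinely different and stronger route than the paper. The paper runs the homotopy operator $H$ through the dilation $\mathcal{H}$ and derives the integral constraint \eqref{eqdiff}. Because that constraint is not satisfied by arbitrary coefficients, the paper concludes that the standard construction breaks down. As you anticipate, this only shows that the particular candidate $H(\mathcal{H}^*\omega)$ fails. It also rests on Lemma~\ref{leminha} and on pulling $\tilde d$ back along $\mathcal{H}$, which the paper itself warns is not automatic.

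Your argument avoids this. First, $\tilde d=(1-\theta)\,d$ in every degree, which is what \eqref{cadeira} gives because the coefficient it differentiates is the standard-basis one. Hence every $\tilde d^{2}$-exact form equals $-(1-\theta)\,d\theta\wedge d\eta$ and satisfies $d\theta\wedge\omega=0$. This is a pointwise necessary condition that excludes all primitives at once. With $\omega=dx^2\wedge dx^3$ and $\partial_1\theta(p)\neq0$, it excludes them on every neighbourhood of $p$, which is precisely the ``even locally'' claim.

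In Scenario~2 you are doing what the paper does in its explicit example. Its choice $\theta=x$, $\omega=y\,dx$ is the special case of your $\omega=x^a\,d\theta$, and both rely on the necessary condition $d\bigl(\omega/(1-\theta)\bigr)=0$. Your version has the mild advantage of working for whatever nonconstant $\theta$ the deformation supplies, not just a chosen one. The closing step of relating your examples to \eqref{eqdiff} adds nothing and can be dropped.

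Two remarks in your Scenario~1 discussion should be removed rather than relied on.
\begin{enumerate}
\item The degree-one example is vacuous. A closed $1$-form fails to be $\tilde d^{2}$-exact only because there are no $(-1)$-forms. This is exactly why constants are closed but non-exact $0$-forms in ordinary de Rham theory, and why the ordinary lemma is stated for $k\ge1$. The same failure occurs for constant $\theta$, so it says nothing about the deformation. Do not present it first or use it as a fallback.
\item The claim that $\theta$ generically has a critical point in the region is false, and for $n=2$ this matters. Near any $p$ with $d\theta(p)\neq0$, take local coordinates $(u,v)$ with $u=\theta$. Then $\tilde d^{2}\eta=-(1-u)\,\partial_v\eta\,du\wedge dv$, so every $2$-form is locally $\tilde d^{2}$-exact by integrating in $v$. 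In two dimensions, a non-vacuous Scenario~1 counterexample therefore exists only at critical points of $\theta$. Simply state Scenario~1 for $n\ge3$.
\end{enumerate}
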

The statement should be understood in the generalized sense appropriate to the deformed differential introduced above. It does not contradict the ordinary Poincaré Lemma, which remains valid for the standard exterior derivative on the same underlying star-shaped domain. For completeness, Theorem \ref{gr} also applies to Scenario 2, in which a form $\omega \in \Lambda^{k}(\widetilde{\mathbb{R}}^n)$ is defined to be closed when $\tilde{d}^2\omega = 0$, and exact when there exists a differential $(k-1)$-form $\eta \in \Lambda^{k-1}(\widetilde{\mathbb{R}}^n)$ such that $\omega = \tilde{d}\eta$. In fact, applying $\tilde{d}$ to $\tilde{d}(H(\mathcal{H}^*\omega)) + H(\tilde{d}(\mathcal{H}^*\omega)) = \omega$, we obtain
\begin{equation}\label{eq:retificacao1}
\tilde{d}^2\left(H(\mathcal{H}^*\omega)\right) + \tilde{d}\left(H(\tilde{d}(\mathcal{H}^*\omega))\right) = \tilde{d}\omega.
\end{equation}
In parallel, applying $\tilde{d}$ to the identity $\tilde{d}(H(\mathcal{H}^*\omega)) + H(\tilde{d}(\mathcal{H}^*\omega)) = i_1^*(\mathcal{H}^*\omega) - i_0^*(\mathcal{H}^*\omega)$ yields
\begin{equation}\label{eq:loucura}
\tilde{d}\left(H(\tilde{d}(\mathcal{H}^*\omega))\right) = i_1^*\tilde{d}(\mathcal{H}^*\omega) - i_0^*\tilde{d}(\mathcal{H}^*\omega),
\end{equation}
under the assumption that $\tilde{d}^2(\mathcal{H}^*\omega) = 0$. Substituting \eqref{eq:loucura} into \eqref{eq:retificacao1}, it follows that $\tilde{d}^2(H(\mathcal{H}^*\omega)) = \tilde{d}\omega - i_1^*\tilde{d}(\mathcal{H}^*\omega) + i_0^*\tilde{d}(\mathcal{H}^*\omega)$, demonstrating that the Poincaré Lemma fails in this case as well.

The explicit construction below is independent of the homotopy argument and provides a direct local obstruction. In particular, it demonstrates the failure of the second notion of the Poincaré property without relying on any global topological information about $\widetilde{\mathbb R}^n$. An explicit example can now be presented. Introducing the operator $\tilde{d}=s\,d,$ 
yields
\begin{equation}
\tilde{d}^{\,2}\omega
=
-s\,d\theta\wedge d\omega.
\end{equation}
Thus, it suffices to construct a form $\omega$ such that $d\theta\wedge d\omega=0$ and $
d\omega\neq0.$ Locally in two dimensions, we can take
\begin{equation}
\theta=x,
\qquad
\omega=y\,dx.
\end{equation}
Then $
d\omega
=
dy\wedge dx
\neq0,$ 
while $
d\theta\wedge d\omega
=
dx\wedge dy\wedge dx
=
0.$ 
Consequently,
 $
\tilde{d}^{\,2}\omega=0.$ 
However, $\omega$ is not necessarily of the form
$\omega=\tilde{d}f=s\,df$,
as would be required by a Poincaré-type lemma. Indeed, a necessary condition for such a representation reads 
\begin{equation}
d\left(\frac{\omega}{s}\right)=0.
\end{equation} Regarding $
s=1-\theta=1-x,$ 
we obtain
$\frac{\omega}{s}=\frac{y}{1-x}\,dx$, 
and hence
\begin{equation}
d\left(\frac{\omega}{s}\right)
=
d\left(\frac{y}{1-x}\,dx\right)
=
\frac{1}{1-x}\,dy\wedge dx
\neq0.
\end{equation}
Therefore, there is no local function $f$ satisfying
$
\omega=\tilde{d}f=s\,df.
$ 
Thus, even though $\tilde{d}^{\,2}\omega=0,$ 
the form $\omega$ is not $\tilde{d}$-exact. 

The results of this section establish three distinct statements. First, the deformed differential defines a higher-order differential structure characterized by third-order nilpotency rather than ordinary second-order nilpotency. Second, this structure naturally gives rise to two generalized notions of cohomology, corresponding to the two possible choices of closed and exact forms. Third, neither generalized notion reproduces the local exactness property of the ordinary Poincaré Lemma in general.

\section{An application to the Hierarchy problem}
\label{sec4}
As it is clear from Section \ref{sec3}, there is no coordinate transformation associated with the form $\tilde{d}x^i$. Indeed, $\tilde{d}^2x^i = -d\theta \wedge dx^i \neq 0$, which highlights the anholonomic nature of the deformed differential forms. The mapping $\tilde{d}x^i = (1-\theta(x))dx^i$ thus represents a frame rescaling rather than a coordinate transformation. Furthermore, the deformed line element (or symmetric $2$-tensor) expressed in this basis is given by\begin{equation}\tilde{\eta}_{ij}\tilde{d}x^i \otimes \tilde{d}x^j = \tilde{\eta}_{ij}(1-\theta(x))^2 dx^i \otimes dx^j.\end{equation}Thus, preserving the algebraic form of the metric requires a compensating conformal transformation $\eta_{ij}\mapsto\tilde{\eta}_{ij} = \frac{1}{(1-\theta)^2}\eta_{ij}$. With these ingredients, one can formulate an approach to the hierarchy problem that parallels considerations in the context of braneworlds and warped geometries \cite{rs,Antoniadis:1998ig}, without requiring extra dimensions. Consider the action for the Higgs field $H$, equipped with a potential allowing for spontaneous symmetry breaking in four dimensions ($n=1+3$), \begin{equation}\label{h1}S \supset \int d^4x\sqrt{-\tilde{\eta}}\{\tilde{\eta}^{ij}(D_iH)^\dagger(D_jH) - \lambda(|H|^2 - v_0^2)^2\},\end{equation}where $D_i$ denotes the covariant derivative. Since $\tilde{\eta}^{ij} = (1-\theta)^2\eta^{ij}$ and  $\sqrt{-\tilde{\eta}} = (1-\theta)^{-4}\sqrt{-\eta}$, Eq. \eqref{h1} becomes\begin{equation}\label{h2}S \supset \int d^4x\sqrt{-\eta}\{(1-\theta)^{-2}\eta^{ij}(D_iH)^\dagger(D_jH) - \lambda (1-\theta)^{-4}(|H|^2 - v_0^2)^2\}.\end{equation}

As discussed in Refs. \cite{jhep,jpa}, physical systems in $\widetilde{\mathbb{R}}^n$ are plagued by Lorentz symmetry violation due to the explicit appearance of $\theta(x)$ terms, as there is no \textit{a priori} reason to attribute the character of a Lorentz scalar to the $\varphi$-deformation. This drawback is bypassed if we study effective models in $\mathbb{R}^{1,3}$. The discussion may be framed in the following way \cite{jpa}: since Eq. \eqref{d1} scales as $\text{length}^{-1}$, its effects are more prominent at small scales or, equivalently, high energies. Thus, one may view $\widetilde{\mathbb{R}}^{1,3}$ as an intermediate spacetime at high energies, whose limit at the highest energy scales evinces a non-trivial fundamental topology while, in the opposite energy direction, recovering $\mathbb{R}^{1,3}$ as the effective background spacetime. Observe that within this perspective there is no change in the fundamental topology, but rather an inability to probe it without a sufficiently high energy scale. In this context, the saturation of $\theta(x)$ at a constant value, say $\bar{\theta}\neq 0$, is particularly appealing for our purposes. Indeed, setting $\theta(x) = \bar{\theta}$ in Eq. \eqref{h2} and performing a field rescaling of $H$, we arrive at\begin{equation}S \supset \int d^4x\sqrt{-\eta}\{\eta^{ij}(D_iH)^\dagger(D_jH) - \lambda(|H|^2 - v^2)^2\},\end{equation}where the symmetry-breaking scale is $v \equiv (1-\bar{\theta})^{-1}v_0$. Thus, the core result is striking: taking $m_0$ at the fundamental Planck scale, every mass $m$ measured in $\mathbb{R}^{1,3}$ is suppressed by a factor $1/(1-\bar{\theta})$. Using the reduced Planck mass $M_{\rm Pl}^{\rm red}\simeq2.4\times10^{18},{\rm GeV}$ and the TeV scale $1,{\rm TeV}=10^3,{\rm GeV}$, we target the hierarchy $
\frac{M_{\rm Pl}^{\rm red}}{1,{\rm TeV}}\simeq2.4\times10^{15}.
$ At this stage, the hierarchy is therefore reproduced by a geometric rescaling of the effective parameters rather than dynamically generated through a solved field equation for $\theta$. The mechanism should consequently be viewed as a kinematical realization of scale separation, with its dynamical completion left open.
Thus,
$
1/(1-\bar{\theta})\sim\frac{1,{\rm TeV}}{M_{\rm Pl}^{\rm red}}\simeq4.2\times10^{-16},
$ 
so that the choice of $1-\bar{\theta}$ is directly tied to the Planck-TeV hierarchy rather than being an arbitrary order-of-magnitude estimate. In this way, TeV scales are recovered if $1-\bar{\theta} \sim 10^{15}$. Such a negative value for $\bar{\theta}$ implies (recalling the discussion around Eq. \eqref{ti}) a saturation value for $\varphi(x)$ given by $\bar{\varphi} \approx \exp(-1.2\times 10^{15})$, likely the smallest number serving a physical purpose. The numerical smallness of this value should not be interpreted as an additional physical input: it is fixed directly by the desired ratio between the electroweak and fundamental scales. The physical question is instead whether a consistent dynamical sector can naturally generate and stabilize such a saturation value. Stated differently, a dynamical mechanism for saturating the $\varphi$-deformation at this precise value must be provided before this approach can be claimed as a complete solution. This is far beyond the scope of the present work. For now, we simply emphasize that the saturation of the $\varphi$-deformation offers a novel pathway to mitigate the hierarchy problem.

\section{Final Remarks and outlook}
\label{sec5}

By incorporating the effects of non-trivial topology and the $\varphi$-deformation into differential forms, we have constructed a deformed differential calculus in which the exterior derivative is no longer nilpotent at second order, but satisfies the higher-order condition $\tilde d^{,3}=0$. This structure leads to a bifurcation of the associated cohomological analysis and, in particular, to the failure of the standard Poincaré Lemma on $\widetilde{\mathbb R}^n$. The resulting obstruction is intrinsic to the deformed differential structure rather than to the ordinary topology of the underlying domain. It persists even on star-shaped regions where the usual de Rham differential admits a contracting homotopy. In this sense, the construction provides a local differential realization of information originating from inequivalent spin structures.

The geometric interpretation of the deformation is encoded in the corresponding anholonomic frame rescaling, which induces a local rescaling of the metric and provides the link between the deformed differential calculus and the physical scales considered in the four-dimensional effective theory. The hierarchy analysis should nevertheless be regarded as an effective realization of this mechanism rather than as a complete dynamical solution. In particular, although a saturation value of the deformation parameter capable of producing the required separation between the fundamental and electroweak scales has been identified, the dynamical origin and stability of this value remain open questions. Establishing such a mechanism would require a more complete theory in which the deformation is itself determined by dynamical equations.

Several directions for future work follow naturally. On the mathematical side, it remains to determine whether at least some version of the generalized cohomological sectors introduced here admits an appropriate de Rham--\v{C}ech correspondence, and to characterize their global and functorial properties. More generally, the relation between the present construction and higher-order differential complexes, including $\mathbb Z_3$-graded structures satisfying $\tilde d^{\,3}=0$, deserves further investigation, particularly in connection with quantum Clifford algebras \cite{Ablamowicz:2014rpa,Goncalves:2023pty,CoronadoVillalobos:2015mns}. On the physical side, the coupling of the deformed differential structure to dynamical gravitational, gauge, and matter fields is an important next step. Black-hole spacetimes provide a particularly natural setting for this analysis, as they may reveal further consequences of the interplay between spin-structure deformations, anholonomy, and higher-order differential structure. These applications will be investigated in forthcoming work \cite{cs}.

\section*{Declaration of generative AI and AI-assisted technologies in the manuscript preparation process}

During the preparation of this work, the authors used Gemini as a language reviewer. After using this tool/service, the authors reviewed and edited the content as needed and took full responsibility for the content of the article.

\section*{Acknowledgments}

JMHS thanks CNPq (Grant No. 307641/2022-8) for financial support. RdR thanks the S\~ao Paulo
Research Foundation – FAPESP (Grants No. 2021/01089-1, No.
2025/23004-9, and No. 2026/14943-4), and to CNPq (Grants No. 303742/2023-2 and No. 401567/2023-0), for
partial financial support. This study was financed in part by the Coordenação de Aperfeiçoamento de Pessoal de Nível Superior - Brasil (CAPES) - Finance Code 001.

\end{document}